\newtheorem{theorem}{Theorem}[section]
\newtheorem{proposition}{Proposition}[section]
\newtheorem{corollary}{Corollary}[section]
\numberwithin{equation}{section}
\begin{document}
\noindent \textsf{\Large Multivariate tests of association based on univariate tests}

\textsf{ Ruth Heller\footnote{\textit{Address for correspondence:} Department of
Statistics and Operations Research, Tel-Aviv university, Tel-Aviv,
Israel.  \ \textsf{E-mail:} ruheller@post.tau.ac.il.  } and Yair Heller}%
\\

\textsf{Abstract. \
For testing two random vectors for independence, we  consider testing whether the distance of one vector from a center point is independent from the distance of the other vector from a center point by a univariate test. 
In this paper we provide conditions under which it is enough to have a consistent univariate test of independence on the distances to guarantee that the power to detect dependence between the random vectors increases to one, as the sample size increases. These conditions turn out to be minimal. If the univariate test is  distribution-free, the multivariate test will also be distribution-free.    
If we consider multiple center points and aggregate the  center-specific univariate tests, the power may be further improved, and the resulting multivariate test may be distribution-free for specific aggregation methods (if the univariate test is distribution-free).  
We show that several multivariate tests recently proposed in the literature can be viewed as instances of this general approach. 
\medskip }

\noindent

\textsf{Keywords: \ High-dimensional response; Independence test;
Multivariate data; Random vector; Two-sample problem.}\newpage

\section{Introduction}
Let ${ X}\in \Re^p$ and ${Y} \in \Re^q$ be random vectors, where $p$
and $q$ are positive integers.  The null hypothesis of independence 
is $H_0: F_{XY} =F_X F_Y,$
where the joint distribution of $({ X},{ Y})$ is denoted by
$F_{XY}$, and the distributions of ${ X}$ and ${ Y}$, respectively,
by $F_X$ and $F_Y$. If $X$ is a categorical variable with $K$ categories, then the null hypothesis of independence is  the null hypothesis in the $K$-sample problem,  $H_0: F_{1} =\ldots =  F_K,$ where $F_k, k\in \{1,\ldots,K\}$ is the distribution of $Y$ in category $k$.

The problem of testing for independence of random vectors, as well as the $K$-sample problem on a multivariate $Y$, against the  general alternative
 $H_1: F_{XY} \neq F_X F_Y,$  has received increased attention in recent years.  The most common approach is based on pairwise distances or similarity measures. See   \cite{Szekely07}, \cite{Gretton08}, \cite{Sejdinovic12}, and \cite{Heller12} for consistent tests of independence, and \cite{Hall02},  \cite{Szekely04}, \cite{Baringhaus04},  \cite{Rosenbaum05b}, \cite{Gretton06}, and \cite{Gretton12b} for  recent $K$-sample test. Earlier tests based on nearest neighbours include \cite{Schilling86}, and \cite{Henze88}.
 Another approach is to first reduce the multivariate data to a lower dimensional sub-space by  (random) projections, see  \cite{Cuesta-Albertos06} and \cite{Wei15}. 


 We suggest the following approach: first compute the  distances from a fixed center point, then apply any univariate test on the distances. If the univariate test is distribution-free, then so is the multivariate test.  In Section \ref{sec-mainresults} we show that a result  by  \cite{Rawat00} in integral geometry implies that  if  $H_0$ is false, then applying a univariate consistent test on distances from a single center point will result, for almost every center point,  in a multivariate test with power increasing to one as the sample size increases. 

 In Section \ref{sec-pooling} we show that considering the distances from $M>1$ points and aggregating the resulting statistics can also result in consistent tests, which may be more powerful than tests that consider a single center point.  Both distribution-free and permutation-based tests can be generated, depending on the choice of aggregation method and univariate test.  

In section \ref{subsec-existing} we draw the connection between these results and some known tests mentioned above. The tests of \cite{Hall02} and  of \cite{Heller12}  can be viewed as instances of this approach, where the  center point is a sample point, and all sample points are considered each in turn as a  center point, for a particular univariate test. 


\section{Main results}\label{sec-mainresults}
We  use the  following result  by \cite{Rawat00}. Let $B_d(x,r) = \{y\in \Re^d: \|x-y\| \leq r  \}$ be a ball centered at $x$ with radius $r$. A complex Radon measure $\mu$ on $\Re^d$ is  said to be of at most exponential-quadratic growth if there exist positive constants $A$ and $\alpha$ such that  $|\mu|(B_d(0, r)) \leq Ae^{\alpha r^2}$.
\begin{proposition}[\citet{Rawat00}] \label{thm-Rawat00}
Let $\Gamma\subset \Re^d$ be such that the only real analytic function (defined on an open set containing $\Gamma$) that vanishes on $\Gamma$, is the zero function.   Let $\mathcal{C} = \{B_d(x, r) : x \in \Gamma, r > 0\}$. Then for any complex Radon measure 
$\mu$ on $\Re^d$ of at most exponential-quadratic growth, if $\mu(C)=0$ for all $C\in \mathcal{C}$, then it necessarily follows that $\mu = 0$. 
\end{proposition}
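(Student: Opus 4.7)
The strategy is to manufacture, from $\mu$, a real analytic function on $\mathbb{R}^d$ that encodes the ball averages and hence must vanish on $\Gamma$. Fix $t > \alpha$ (large enough for absolute convergence) and set
\begin{equation*}
F(x) \;=\; \int_{\mathbb{R}^d} e^{-t\|x-y\|^2}\, d\mu(y).
\end{equation*}
The exponential-quadratic bound $|\mu|(B_d(0,r)) \le A e^{\alpha r^2}$, combined with an elementary inequality of the form $\|x-y\|^2 \ge \tfrac12\|y\|^2 - \|x\|^2$, makes this integral converge absolutely and locally uniformly for every $x \in \mathbb{R}^d$.

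Next I would verify two properties of $F$. First, $F$ is real analytic on $\mathbb{R}^d$: the integrand extends to an entire function of $x \in \mathbb{C}^d$, and the uniform bound on compacts allows differentiation under the integral to any order (equivalently, Morera's theorem applies). Second, $F$ vanishes on $\Gamma$: for any fixed $x \in \Gamma$, introduce the real nondecreasing function $\nu_x(r) := \mu(B_d(x,r))$, so that, since the integrand depends on $y$ only through $\|x-y\|$, one may rewrite $F(x) = \int_0^\infty e^{-tr^2}\, d\nu_x(r)$ as a Stieltjes integral; the hypothesis $\nu_x \equiv 0$ then forces $F(x) = 0$. By the standing assumption on $\Gamma$, the real analytic function $F$ must therefore be identically zero on $\mathbb{R}^d$.

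Finally I would deduce $\mu = 0$ from $F \equiv 0$. Factoring $e^{-t\|x-y\|^2} = e^{-t\|x\|^2}\, e^{2t\, x\cdot y}\, e^{-t\|y\|^2}$ and introducing the \emph{finite} complex measure $d\tilde\mu(y) := e^{-t\|y\|^2}\, d\mu(y)$, the identity $F \equiv 0$ becomes
\begin{equation*}
\int_{\mathbb{R}^d} e^{2t\, x\cdot y}\, d\tilde\mu(y) \;=\; 0 \quad \text{for every } x \in \mathbb{R}^d.
\end{equation*}
This is the bilateral Laplace transform of the finite measure $\tilde\mu$; by analytic continuation in $x$ to $\mathbb{C}^d$ and specialization to purely imaginary arguments, the Fourier transform of $\tilde\mu$ vanishes identically, so $\tilde\mu = 0$ by Fourier uniqueness and therefore $\mu = 0$ as well, since the Gaussian density is strictly positive.

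The main obstacle is calibrating the growth parameter: one must pick $t$ large enough that both the integrand and its complex extension are dominated by integrable functions uniformly on compacts, and simultaneously justify the Stieltjes rewriting of $F(x)$ in terms of $\nu_x$ (which is essentially layer-cake applied to a radial function of rapid decay). Once these technicalities are controlled, the analyticity step and the Fourier/Laplace uniqueness at the end are standard.
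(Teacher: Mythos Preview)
The paper does not prove this proposition; it is quoted verbatim from \cite{Rawat00} and used as a black box in the proofs of Theorems~\ref{thm-2sample} and~\ref{thm-indep}. There is therefore no proof in the paper to compare yours against.

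That said, your outline is essentially the standard Gaussian--transform argument and is sound. Three small repairs are needed. First, since $\mu$ is a \emph{complex} measure, $\nu_x$ is not ``real nondecreasing''; what you actually use is that the pushforward of $\mu$ under $y\mapsto\|x-y\|$ is a complex measure on $[0,\infty)$ whose values on all intervals $[0,r]$ vanish, hence it is the zero measure, and $F(x)=\int g\,d\nu_x=0$ follows. Second, $F$ is complex-valued, so to invoke the defining property of $\Gamma$ you must apply it to $\mathrm{Re}\,F$ and $\mathrm{Im}\,F$ separately; both are real analytic and vanish on $\Gamma$, hence both vanish identically. Third, with the inequality $\|x-y\|^2\ge\tfrac12\|y\|^2-\|x\|^2$ the domination against $|\mu|(B_d(0,r))\le Ae^{\alpha r^2}$ requires $t/2>\alpha$, i.e.\ $t>2\alpha$, not merely $t>\alpha$; this is harmless since $t$ is free. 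With these adjustments your Laplace/Fourier uniqueness step goes through as written.
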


For the two-sample problem, let  $Y\in \Re^q$ be a random variable with cumulative distribution $F_1$ in category $X=1$, and  $F_2$ in category $X=2$.  For $z \in \Re^q$, let $F'_{iz}$ be the cumulative distribution function of  $\|Y-z\|$ when $Y$ has cumulative distribution $F_i$, $i\in \{1,2\}$. We show that if the distribution of $Y$ differs across categories, then so does the distribution of the distance of $Y$ from almost every point $z$.  Therefore,
any univariate consistent two-sample test on the distances from $z$ results in a consistent  test of  the equality of the multivariate distributions $F_1$ and $F_2$, for almost every $z$. 
It is straightforward to generalize these results to $K>2$ categories.

\begin{theorem}\label{thm-2sample}
If $H_0: F_1=F_2$ is false,  then for every $z\in \Re^q$, apart from at most a set of Lebesgue measure 0,  there exists an $r>0$ such that $F'_{1z}(r) \neq F'_{2z}(r)$.
\end{theorem}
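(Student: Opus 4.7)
The plan is to reformulate Theorem \ref{thm-2sample} in terms of the signed Radon measure $\mu := P_1 - P_2$, where $P_i$ denotes the probability measure with cumulative distribution function $F_i$, and then invoke Proposition \ref{thm-Rawat00}. The preliminary observation is that $F'_{iz}(r) = P_i(B_q(z,r))$, so the exceptional set
\[
\Gamma := \{\, z \in \Re^q : F'_{1z}(r) = F'_{2z}(r) \text{ for all } r > 0 \,\}
\]
coincides with $\{z : \mu(B_q(z,r)) = 0 \text{ for all } r > 0\}$, and the theorem reduces to showing that $\Gamma$ has Lebesgue measure zero. Since $|\mu|(\Re^q) \le 2$, the measure $\mu$ is trivially of at most exponential-quadratic growth, so Proposition \ref{thm-Rawat00} is available once its hypothesis on the center set is verified.

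I would then argue by contradiction. Suppose $\Gamma$ has positive Lebesgue measure. By construction, the family $\mathcal{C} = \{B_q(x,r) : x \in \Gamma,\, r > 0\}$ satisfies $\mu(C) = 0$ for every $C \in \mathcal{C}$. To deduce $\mu = 0$ from Proposition \ref{thm-Rawat00}, I must check that $\Gamma$ is an analytic-uniqueness set, i.e.\ that any real analytic function on an open set containing $\Gamma$ which vanishes on $\Gamma$ must be identically zero. For this I would invoke the standard fact from real analytic geometry that a nonzero real analytic function on a connected open subset of $\Re^q$ has a zero set of Lebesgue measure zero (it is a proper real analytic variety); hence any real analytic function on $\Re^q$, or on a connected open set containing $\Gamma$, that vanishes on a set of positive Lebesgue measure must itself be the zero function. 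Proposition \ref{thm-Rawat00} then yields $\mu = 0$, so $P_1 = P_2$ and $F_1 = F_2$, contradicting the assumption that $H_0$ is false.

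The main obstacle I foresee is the clean verification that ``positive Lebesgue measure'' translates into the analytic-uniqueness hypothesis of Proposition \ref{thm-Rawat00}; this is the real content of the Pompeiu-type machinery being imported. The remaining steps---identifying $F'_{iz}(r)$ with $P_i(B_q(z,r))$, bounding $|\mu|(\Re^q)$, and stitching the contradiction together---are routine bookkeeping. The same argument, applied to each pair $(F_i, F_j)$ and a finite union of null sets, gives the generalization to $K > 2$ categories mentioned just before the theorem.
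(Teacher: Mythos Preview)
Your proposal is correct and follows essentially the same route as the paper: argue by contradiction, let $\Gamma$ be the exceptional set of positive Lebesgue measure, observe that the signed measure $F_1-F_2$ is of at most exponential-quadratic growth, note that a set of positive Lebesgue measure is an analytic-uniqueness set, and invoke Proposition~\ref{thm-Rawat00} to force $F_1=F_2$. Your write-up is in fact slightly more careful than the paper's (explicitly identifying $F'_{iz}(r)=P_i(B_q(z,r))$, bounding $|\mu|(\Re^q)\le 2$, and spelling out why positive measure implies the analytic-uniqueness hypothesis), but the argument is the same.
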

\begin{proof}
Suppose by contradiction, that there is a set $\Gamma\subseteq \Re^q$ with positive Lebesgue measure, such that for all $z\in \Gamma$, $F'_{1z}(r) = F'_{2z}(r)$ for all $r >0$. It follows that $\int_{y\in B_q(z,r)}dF_1(y) -\int_{y\in B_q(z,r)}dF_2(y) = 0$ for all $r>0$ and $z\in \Gamma$. 
Since $|F_1-F_2|\leq 1$, clearly $F_1-F_2$ is of at most exponential-quadratic growth. 
Moreover,  the only real analytic function that vanishes on $\Gamma$ is the zero function, since $\Gamma$ has positive Lebesgue measure. Therefore, it follows from Proposition \ref{thm-Rawat00} that  $F_1-F_2=0$, thus contradicting the fact that $H_0$ is false. 
\end{proof}
\begin{corollary}\label{cor-2sample}
For every $z\in  \Re^q$, apart from at most a set of Lebesgue measure 0, a consistent two-sample univariate test of the null hypothesis $H'_0: F'_{1z}=F'_{2z}$ will reject  $H_0: F_1=F_2$ with a power increasing to one as the sample size increases.
\end{corollary}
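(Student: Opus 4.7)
The plan is to deduce the corollary almost immediately from Theorem~\ref{thm-2sample}, since the work of moving from distances to the multivariate conclusion has already been done there; what remains is to link the distance-level hypothesis $H'_0$ with the original $H_0$ and to invoke consistency of the univariate test.

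First I would fix a point $z \in \mathbb{R}^q$ lying outside the Lebesgue-null exceptional set provided by Theorem~\ref{thm-2sample}. Then I would dichotomize on whether $H_0$ holds. If $H_0$ is true, then $F_1 = F_2$, and since the distance $\|Y-z\|$ is a measurable function of $Y$, it follows that $F'_{1z} = F'_{2z}$, i.e.\ $H'_0$ holds; so the size of the multivariate test inherits the size of the univariate test by construction. If instead $H_0$ is false, I would apply Theorem~\ref{thm-2sample} to the chosen $z$ to obtain some $r>0$ with $F'_{1z}(r) \neq F'_{2z}(r)$, and hence conclude that $H'_0$ is false at $z$.

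Next I would invoke the consistency assumption: because the univariate test is consistent against any fixed alternative to $H'_0$, its power to reject $H'_0$ based on the pooled samples of distances $\{\|Y_i - z\|\}$ tends to one as the sample size grows. Finally, since the multivariate test is defined as the univariate test applied to these distances, the events \emph{reject $H'_0$} and \emph{reject $H_0$} coincide, and the power conclusion transfers directly.

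I do not expect a real obstacle: the analytic content is entirely absorbed by Proposition~\ref{thm-Rawat00} and Theorem~\ref{thm-2sample}. The only subtlety worth flagging is that consistency of the univariate test must be understood as holding against the specific alternative $F'_{1z}\neq F'_{2z}$ that Theorem~\ref{thm-2sample} produces; if one wanted uniformity over a class of alternatives, a slightly stronger consistency hypothesis would be needed, but for the pointwise statement of the corollary ordinary consistency suffices.
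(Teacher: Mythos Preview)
Your proposal is correct and follows essentially the same route as the paper: invoke Theorem~\ref{thm-2sample} to conclude that $H'_0$ is false for $z$ outside a Lebesgue-null set, then appeal to consistency of the univariate test to get power tending to one. Your version is slightly more detailed in that you also check the size side (the case $H_0$ true), which the paper's proof omits, but the core argument is identical.
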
    
\begin{proof}
If $H_0: F_1=F_2$ is false, then Theorem \ref{thm-2sample} guarantees that for every $z$, apart from at most a set of Lebesgue measure 0, 
the null univariate hypothesis, $H'_0: F'_{1z}=F'_{2z}$, is false. Since for such a $z$ the asymptotic power of a false null univariate hypothesis will be one for any consistent two-sample univariate test, the power of the multivariate test will be one. 
\end{proof}

For the multivariate test of independence, let  $X\in R^p$ and $Y\in \Re^q$ be two random vectors  with marginal distributions $F_X$ and $F_Y$, respectively, and with joint distribution $F_{XY}$. For $z = (z_x,z_y), z_x \in \Re^p, z_y \in \Re^q$, let $F'_{XYz}$ be the joint distribution of $(\|X-z_x \|, \|Y-z_y\|)$. Let  $F'_{Xz}$ and $F'_{Yz}$ be the marginal distribution of $\|X-z_x \|$ and $\|Y-z_y\|$, respectively.

\begin{theorem}\label{thm-indep}
If $H_0: F_{XY} = F_XF_Y$ is false, then for every $z_x\in \Re^{p}, z_y\in \Re^q$, apart from at most a set of Lebesgue measure 0, there exists $ r_x>0, r_y>0$, such that $F'_{XYz}(r_x,r_y) \neq F'_{Xz}(r_x)F'_{Yz}(r_y)$.
\end{theorem}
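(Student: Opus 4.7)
The plan is to mimic the proof of Theorem \ref{thm-2sample} but apply Rawat's result twice, once in each factor space. Arguing by contradiction, I assume the set $\Gamma \subseteq \Re^{p+q}$ of centers $(z_x,z_y)$ for which $F_{XY}(B_p(z_x,r_x)\times B_q(z_y,r_y)) = F_X(B_p(z_x,r_x))\, F_Y(B_q(z_y,r_y))$ holds for every $r_x,r_y>0$ has positive Lebesgue measure. The main obstacle is that Proposition \ref{thm-Rawat00} accepts only single balls in one $\Re^d$ as input, whereas the sets in play here are Cartesian products of balls living in $\Re^p$ and $\Re^q$ separately. I would bridge this gap with a Fubini slicing, extracting a set $\Gamma' \subseteq \Re^q$ of positive Lebesgue measure such that for each $z_y \in \Gamma'$ the section $\Gamma_{z_y} = \{z_x : (z_x,z_y)\in\Gamma\}$ also has positive Lebesgue measure in $\Re^p$.

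With the slicing in hand, I invoke Proposition \ref{thm-Rawat00} twice. First, fix any $z_y \in \Gamma'$ and $r_y > 0$ and consider the signed measure $A \mapsto (F_{XY} - F_X F_Y)(A \times B_q(z_y,r_y))$ on $\Re^p$. Its total variation is bounded by $2$, so it trivially has at most exponential-quadratic growth; it vanishes on $B_p(z_x, r_x)$ for every $z_x \in \Gamma_{z_y}$ and every $r_x > 0$; and $\Gamma_{z_y}$ has positive Lebesgue measure, so the only real-analytic function vanishing on it is the zero function. Proposition \ref{thm-Rawat00} therefore kills this signed measure on every Borel subset of $\Re^p$. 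Next, for any fixed Borel $A \subseteq \Re^p$, apply the proposition a second time to $B \mapsto (F_{XY} - F_X F_Y)(A \times B)$ on $\Re^q$: the previous step shows it annihilates every ball $B_q(z_y,r_y)$ with $z_y \in \Gamma'$, which again has positive measure, so this signed measure also vanishes identically.

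Assembling the conclusions, $F_{XY}(A \times B) = F_X(A)F_Y(B)$ for every Borel rectangle in $\Re^p\times\Re^q$, and uniqueness of extension from the semi-ring of rectangles to the Borel $\sigma$-algebra on $\Re^{p+q}$ forces $F_{XY} = F_X F_Y$, contradicting the hypothesis that $H_0$ is false. The one genuinely non-routine moment is the paired application of Rawat's theorem; everything else (Fubini slicing, bounded total variation, uniqueness of product extension) is standard.
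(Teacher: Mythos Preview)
Your argument is correct but takes a different route from the paper. You flag as the ``main obstacle'' that Proposition~\ref{thm-Rawat00} only handles balls in a single $\Re^d$, not Cartesian products of balls, and you resolve this by a Fubini slice followed by two successive applications of the proposition, one in $\Re^p$ and one in $\Re^q$. The paper sidesteps the obstacle entirely: if $F'_{XYz}(r_x,r_y)=F'_{Xz}(r_x)F'_{Yz}(r_y)$ for all $r_x,r_y>0$ at a fixed $z=(z_x,z_y)$, then the pushforwards of $F_{XY}$ and of $F_XF_Y$ under $(x,y)\mapsto(\|x-z_x\|,\|y-z_y\|)$ agree on all rectangles in $[0,\infty)^2$ and hence coincide as measures there; in particular they agree on the quarter-disc $\{u^2+v^2\le r^2\}$, which pulls back to the Euclidean ball $B_{p+q}(z,r)\subset\Re^{p+q}$. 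This reduces the problem to a single invocation of Theorem~\ref{thm-2sample} in $\Re^{p+q}$ with $F_1=F_{XY}$ and $F_2=F_XF_Y$. The paper's route is shorter and recycles the already-proved two-sample result wholesale; your two-pass route is more explicit about how the product structure is dismantled and never needs to pass through balls in the joint space at all.
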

\begin{proof}
Suppose by contradiction, that there is a set $\Gamma \subseteq \Re^{p+q}$ with positive Lebesgue measure, such that for all $z\in \Gamma$, $F'_{XYz}(r_x,r_y) = F'_{Xz}(r_x)F'_{Yz}(r_y)$ for all $r_x>0, r_y>0$. It follows that for all $z\in \Gamma$ and $r_x>0, r_y>0$, 
$$
\int_{(\|x-z_x\|,\|y-z_y\|)\leq (r_x,r_y)}dF_{XY}(x,y) =  \int_{\|x-z_x\|\leq r_x}dF_{X}(x)\int_{\|y-z_y\|\leq r_y}dF_Y(y) .
$$
 It thus follows that for  all $z\in \Gamma$ and any $r>0$,
 \begin{equation}\label{eq-1-thm-indep}
\int_{(\|(x,y)-(z_x,z_y)\|)\leq r}dF_{XY}(x,y) =  \int_{\|(x,y)-(z_x,z_y)\|\leq r}dF_{X}(x)dF_{Y}(y).
\end{equation}
However, from  Theorem \ref{cor-2sample}, with $F_1 = F_{XY}$ and $F_2 = F_XF_Y$, it follows that for all $z\in \Gamma$, apart from a set of Lebesgue measure 0, there exists an $r>0$ such that  $$\int_{(\|(x,y)-(z_x,z_y)\|)\leq r}dF_{XY}(x,y) \neq  \int_{\|(x,y)-(z_x,z_y)\|\leq r}dF_{X}(x)dF_{Y}(y),$$ 
thus contradicting (\ref{eq-1-thm-indep}). 
\end{proof}

\begin{corollary}\label{cor-thm-indep}
For every $z \in  \Re^{p+q}$, apart from at most a set of Lebesgue measure 0, a consistent  univariate test of inependence of the null hypothesis $H'_0: F'_{XYz}  = F'_{Xz}F'_{Yz}$ will reject  $H_0: F_{XY}=F_{X}F_{Y}$ with a power increasing to one as the sample size increases.
\end{corollary}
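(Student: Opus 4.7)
The plan is to mirror exactly the strategy used in the proof of Corollary \ref{cor-2sample}, since Theorem \ref{thm-indep} plays the same role for independence that Theorem \ref{thm-2sample} plays for the two-sample problem. The entire analytical work has already been done upstream: Theorem \ref{thm-indep} converts the falsity of the multivariate null $H_0: F_{XY} = F_X F_Y$ into the falsity of the one-dimensional null $H'_0: F'_{XYz} = F'_{Xz} F'_{Yz}$ at almost every $z$. Thus the corollary reduces to a one-line bookkeeping argument about consistent tests.

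Concretely, I would proceed in three short steps. First, assume $H_0$ is false and fix any $z \in \mathbb{R}^{p+q}$ outside the Lebesgue-null exceptional set produced by Theorem \ref{thm-indep}; by that theorem, there exist $r_x, r_y > 0$ with $F'_{XYz}(r_x, r_y) \neq F'_{Xz}(r_x) F'_{Yz}(r_y)$, so the one-dimensional null $H'_0$ fails at $z$. Second, appeal to the defining property of a consistent univariate test of independence: whenever the univariate null is false, the probability that the test rejects $H'_0$ at any fixed level $\alpha$ tends to $1$ as the sample size grows. Third, observe that a rejection of $H'_0$ by the univariate test on the distances is, by construction, exactly what we declare to be a rejection of $H_0$ by the multivariate procedure; therefore the asymptotic power of the multivariate test at such a $z$ is $1$.

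There is essentially no obstacle here, only the need to state the reduction cleanly and to make explicit that ``consistent univariate test of independence'' means consistency against any fixed alternative in which the two coordinates are dependent, so it applies at the particular alternative $F'_{XYz}$ singled out by Theorem \ref{thm-indep}. If one wanted to be pedantic, the only subtle point would be that the exceptional Lebesgue-null set of bad $z$ values is the one supplied by Theorem \ref{thm-indep} and does not depend on the sample, so the consistency conclusion is genuinely uniform in the sample for each fixed good $z$; this is immediate from the statement of Theorem \ref{thm-indep} but deserves a brief mention for clarity.
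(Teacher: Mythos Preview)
Your proposal is correct and follows essentially the same approach as the paper's own proof: invoke Theorem~\ref{thm-indep} to conclude that $H'_0$ is false for almost every $z$, then use the definition of a consistent univariate test to get power tending to one, noting that rejection of $H'_0$ is by construction rejection of $H_0$. The paper's proof is simply a two-sentence version of your three steps, without the additional remarks about the meaning of consistency or the sample-independence of the exceptional set.
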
    
\begin{proof}
If $H_0: F_{XY}=F_{X}F_{Y}$ is false, then Theorem \ref{thm-indep} guarantees that for  every $z$, apart from at most a set of Lebesgue measure 0,
the null univariate hypothesis, $H'_0: F'_{XYz}  = F'_{Xz}F'_{Yz}$, is false. Since for such a $z$ the asymptotic power of a false null univariate hypothesis will be one for any consistent  univariate test of independence, the power of the multivariate test will be one. 
\end{proof}

We have $N$ independent copies $({ x}_i,{ y}_i)$ ($i=1,\ldots,N$) from the joint distribution $F_{XY}$. The above results motivate the following two-step procedure for the  multivariate test. 
For the $K$-sample test,  $x_i\in \{1,\ldots, K\}$ determines the category and $y_i\in  \Re^q$ is the observation in category $x_i$, so the two-step procedure is to first choose $z\in \Re^q$ and  then to  apply a univariate $K$-sample consistent test on $(x_1,\|y_1-z \|), \ldots, (x_N,\|y_N-z \|)$.  Examples of such univariate tests include the classic Kolmogorov-Smirnov and Cramer-von Mises tests. 
For the test of independence,  the two-step procedure is to first choose $z_x\in \Re^p$ and $z_y\in \Re^q$, and then to apply a univariate consistent test of independence on $(\|x_1-z_x\|,\|y_1-z_y \|), \ldots, (\|x_N-z_x\|,\|y_N-z_y \|)$.  An example of such a univariate test is the classic test of \cite{Hoeffding48}.  See \cite{Hellera} for novel  $K$-sample and independence tests and a review of existing distribution-free univariate tests. Note that  the consistency of a univariate test may be satisfied only under some assumptions  on the distribution of the distances. For example, the consistency of \cite{Hoeffding48} follows if the densities of $\|X-z_x\|$ and $\|Y-z_y\|$ are continuous. 

A great advantage of the two-step procedure is the fact that it has the same computational complexity as the univariate test. 
For example, if one chooses to use Hoeffding's univariate independence permutation test \citep{Hoeffding48}  , then the total complexity is only $O(N\log N)$, which is the cost of computing the test statistic. The $p$-value can be extracted from a look-up table since Hoeffding's test is distribution-free. For comparison, note that the  computational complexity of the multivariate permutation tests of \cite{Szekely07} and  \cite{Heller12} is $O(BN^2)$, and  $O(BN^2\log N)$, respectively, where $B$ is the number of  permutations. For many univariate tests the asymptotic null distribution is known, thus it can be used to compute the significance efficiently without resorting to permutations, which are typically required for assessing the multivariate significance.  

Another advantage of the two-step procedure is the fact that the test statistic may be estimating an easily interpretable population value. The univariate test statistics often converge to easily interpretable population values, which are often between 0 and 1. These values carry over to provide meaning to the new multivariate statistics.

In practice, the choice of the center value from which the distances are measured can be important. In the next Section \ref{sec-pooling}, we suggest generalizations of the above two-step approach that use multiple center points.

\section{Pooling univariate tests together}\label{sec-pooling}

We need not rely on a single $z\in \Re^{p+q}$ (or a single $z\in \Re^q$ for the $K$-sample problem). If we apply a consistent univariate test using many points  $z_i$ for $i=1,\ldots,M$ as our center points, where the test is applied on the distances of the  $N$ sample points from the center point, we obtain $M$ test-statistics and corresponding $p$-values,  $p_1,\ldots,p_M$. 

We can use the $p$-values or the test statistics of the univariate tests to design consistent multivariate tests. We suggest three useful approaches. The first approach is to combine the $p$-values, using a  combining function $f: [0,1]^M\rightarrow [0,1]$. Common combining functions include $f(p_1,\ldots,p_M) = \min_{i=1,\ldots,M} p_i$, and $f(p_1,\ldots,p_M) =-2\sum_{i=1}^M \log p_i$. 
The test statistic  $f(p_1,\ldots,p_M) = \max_{i=1,\ldots,M} p_i$ may also have excellent power in applications where the univariate test on the distances from almost all points has power.  The second approach is to combine the univariate test statistics, by taking the  average, maximum, or minimum statistic. These aggregation methods can result in test statistics which converge to meaningful population values, see equation (\ref{eq-KSmax}) below for multivariate tests based on the  univariate Kolmogorov-Smirnov two sample test  \citep{Kolmogorov41}.  We note that if the univariate tests are distribution-free then taking the maximum (minimum) $p$-value is equivalent to taking the minimum (maximum) test statistic (when the test rejects for large values of the test statistic).  The significance of the combined $p$-value or the combined test statistic can be computed by a permutation test.

A drawback of the two approaches above is that the distribution-free property of the univariate test does not carry over to the multivariate test. In our third approach, we consider the set of $M$ $p$-values as coming from the family of $M$ null hypotheses, and then apply a valid test of the global null hypothesis that all $M$ null hypotheses are true. Let $p_{(1)}\leq \ldots \leq p_{(M)}$ be the sorted $p$-values. The simplest valid test for any type of dependence is the Bonferroni test, which will reject the global null if $M p_{(1)} \leq \alpha$. Another valid test is the test of  \cite{Hommel1983}, which rejects if $\min_{j\geq 1} \{M (\sum_{l=1}^M1/l) p_{(j)}/j\}\leq \alpha$. This test statistic was suggested independently in a multiple testing procedure for false discovery rate control under general dependence in \cite{Benjamini01}.     
This approach is computationally much more efficient than the first two approaches, since no permutation test is required after the computation of the univariate $p$-values, but it may be less powerful. Clearly, if the univariate test is distribution free, the resulting multivariate test  is also distribution free.

As an example we shall prove that when using the Kolmogorov-Smirnov two sample test  as the univariate test, all of the pooling methods above result in consistent multivariate two-sample tests.  Let $KS(z) = \sup_{d\in \Re}|F'_{1z} (d) - F'_{2z}(d)|$ be the population value of the univariate Kolmogorov-Smirnov two sample test statistic comparing the distribution of the distances. Let $N$ be the total number of independent observations, where we assume for simplicity an equal number of observations from $F_1$ and from $F_2$. 
\begin{theorem}\label{thm-poolingKS}
Let ${z_1,\ldots, z_M}$ be a sample of center points from an absolutely continuous distribution with probability measure $\nu$, whose support $S$ has a positive Lebesgue measure in $\Re^q$. Let $KS_N(z_i)$ be the empirical value of $KS(z_i)$ with corresponding $p$-value $p_i$, $i=1,\ldots,M$. Let $p_{(1)}\leq \ldots \leq p_{(M)}$ be the sorted $p$-values.  Assume that the distribution functions $F_1$ and $F_2$ are continuous.  For $M = o(e^N)$, if $H_0: F_1=F_2$ is false,  then $\nu$-almost surely, as  $N\rightarrow \infty$, the power will increase to one for the following level $\alpha$ tests: 
\begin{enumerate}
\item the  permutation test using the  test statistics $S1=\max_{i=1,\ldots,M}\{KS_N(z_i)\}$ or  $S2=p_{(1)}$.
\item the test based on Bonferroni, which rejects $H_0$ if $M p_{(1)}\leq \alpha$.
\item for $M\log M = o(e^N)$, the test based on  Hommel's global null  $p$-value, which rejects $H_0$ if $\min_{j=1,\ldots,M}\left\{M (\sum_{l=1}^M1/l) p_{(j)}/j\right\}\leq \alpha$.
\item the  permutation tests using the statistics $T1=\sum_{i=1}^M KS_N(z_i)$ or  $T2=-2\sum_{i=1}^M\log p_i$.
\end{enumerate}

\end{theorem}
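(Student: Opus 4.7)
My strategy is to decompose each of the four consistency claims into an alternative-side lower bound on the observed statistic and a null-side upper bound on the $1-\alpha$ critical value, both driven by two ingredients: Theorem \ref{thm-2sample}, which guarantees $KS(z)>0$ for Lebesgue-almost every $z$ and therefore, by absolute continuity of $\nu$, for $\nu$-almost every $z$; and the two-sample DKW inequality applied to the distances from $z$, which under $F_1=F_2$ (and hence under any relabeling of the pooled sample) gives $P\bigl(|KS_N(z)-KS(z)|\geq\epsilon\bigr)\leq 4 e^{-N\epsilon^2/4}$. Pick $\delta>0$ with $\pi_\delta:=\nu(\{z:KS(z)\geq\delta\})>0$; the SLLN applied to the Bernoulli variables $\mathbf{1}\{KS(z_i)\geq\delta\}$ shows that $\nu$-almost surely a fraction $\geq \pi_\delta/2$ of the $z_i$'s satisfies $KS(z_i)\geq\delta$ once $M$ is large, yielding in particular some index $i^\star$ with $KS(z_{i^\star})\geq\delta$. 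A union bound of DKW across the $M$ indices then gives $P\bigl(\sup_{i\leq M}|KS_N(z_i)-KS(z_i)|\geq\epsilon\bigr)\leq 4M e^{-N\epsilon^2/4}\to 0$ whenever $\log M=o(N)$; hence $\frac{1}{M}\sum_i KS_N(z_i)\to\mu:=\int KS\,d\nu>0$ in probability, $KS_N(z_{i^\star})\to KS(z_{i^\star})\geq\delta$, and the corresponding univariate $p$-value satisfies $p_{i^\star}\leq 2 e^{-N\delta^2/8}$ eventually.

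Parts (1)--(3) then follow directly. For (1), $S1\geq KS_N(z_{i^\star})\to\delta$, while union-bounded DKW under the permutation null shows that the $1-\alpha$ permutation quantile of $\max_i KS_N(z_i)$ is at most $\sqrt{2\log(2M/\alpha)/N}=o(1)$; since the univariate KS test is distribution-free, $S2=p_{(1)}$ is a strictly decreasing transformation of $S1$ and inherits its consistency. For (2), $M p_{(1)}\leq 2 M e^{-N\delta^2/8}\to 0$ because $M=o(e^N)$. For (3), the $j=1$ term of Hommel's statistic gives $M\bigl(\sum_{l=1}^M 1/l\bigr)p_{(1)}=O(M\log M)\cdot 2 e^{-N\delta^2/8}\to 0$ because $M\log M=o(e^N)$.

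For (4), the observed $T1/M\to\mu>0$ as noted, while under permutations $T1/M\leq\max_i KS_N(z_i)\leq\sqrt{2\log(2M/\alpha)/N}=o(1)$ with permutation-probability $\geq 1-\alpha$, so the $1-\alpha$ permutation quantile of $T1/M$ is $o(1)$, dominated by $\mu$. For $T2$, a fraction $\geq\pi_\delta/2$ of the $-\log p_i$'s exceeds $N\delta^2/16$, so $T2\geq \pi_\delta M N\delta^2/16$; under permutations each $p_i$ is uniform by distribution-freeness, making $-\log p_i$ marginally $\mathrm{Exp}(1)$, hence $E_{\mathrm{perm}}[T2]=2M$ and Markov bounds the $1-\alpha$ permutation quantile by $2M/\alpha=o(MN)$. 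The main obstacle throughout is controlling the strongly dependent family of $M$ center-specific KS statistics under a single permutation while $M$ may grow as fast as $o(e^N)$; the exponential DKW tail $e^{-\Omega(Nt^2)}$ is exactly what survives a union bound at this scale, and the hypothesis $\log M=o(N)$ (implied by $M=o(e^N)$) is tight for the argument.
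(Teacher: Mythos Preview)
Your approach is essentially the paper's: both rest on Theorem~\ref{thm-2sample}/Corollary~\ref{cor-2sample} to guarantee $KS(z)>0$ for $\nu$-almost every center, and on the DKW inequality together with a union bound over the $M$ centers to control the observed statistic under the alternative and the permutation/null distribution under $H_0$. Items 1--3 match the paper's argument almost verbatim (the paper also reduces item 1 to item 2 via the observation that the permutation $p$-value of $p_{(1)}$ is at most $Mp_{(1)}$, which is your union bound in other clothing).

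For item 4 you take a slightly cleaner route than the paper. The paper fixes $z_0$ with $KS(z_0)=c>0$, invokes the Lebesgue density theorem and continuity of $z\mapsto KS(z)$ to obtain a ball $B_q(z_0,\epsilon')$ on which $KS(z)>c/2$, and then argues that a positive $\nu$-fraction of the sampled centers lands there; for the null side it uses $E[T1]=O(M/\sqrt{N})$ plus Markov. You bypass the density-theorem/continuity step by observing directly that $\pi_\delta=\nu\{KS\geq\delta\}>0$ for some $\delta>0$ and applying the SLLN to the indicators $\mathbf{1}\{KS(z_i)\geq\delta\}$, and you bound the permutation quantile of $T1/M$ by that of $\max_i KS_N(z_i)$ rather than by the mean. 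Both variants work and yield the same conclusion.

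One slip to fix: you assert that ``$\log M=o(N)$ (implied by $M=o(e^N)$)''. This implication is false; take $M=e^{N/2}$, which satisfies $M=o(e^N)$ but $\log M/N\to 1/2$. Your uniform-DKW bound $4Me^{-N\epsilon^2/4}$ (and the paper's analogue $4Me^{-Nc^2/8}$) vanishes only when $M=o(e^{c'N})$ for an alternative-dependent constant $c'>0$ determined by $\delta$ or $c$, not for arbitrary $M=o(e^N)$. The paper is equally loose on this point, so your argument is no worse than the original here, but the parenthetical should be dropped or corrected.
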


\begin{proof}
 Proving item 2 will prove item 1 since if the Bonferonni adjusted $p$-value is consistent then so is  the permutation test based on the minimum p-value (or maximum statistic), which has necessarily a smaller $p$-value than $Mp_{(1)}$. 
We need to show that the probability of rejection goes to one when $H_0$ is false. 
According to Corollary \ref{cor-2sample} when $H_0$ is false, $\nu$-almost surely any point $z_i$ offers a consistent univariate test. Therefore, $\nu$-almost surely,
$\sup_{i=1,\ldots,M} KS(z_i) \geq KS(z_1)>0$. 
Let $d_0$ be a distance such that $|F'_{1z_1}(d_0)-F'_{2z_1}(d_0)|=c>0$. 

Let $F'_{iz_1N}$ be the empirical cumulative distribution function based on $N/2$ sampled distances from $F'_{iz_1}$, $i\in \{ 1,2\}$. The test statistic is bounded away from zero:
\begin{eqnarray}
&&  {\rm pr}\{\sup_{i=1, \ldots, M} {KS_N(z_i)}>c/2)\}\geq  {\rm pr}\{KS_N(z_1)>c/2\} =  {\rm pr}\{\sup_d|F'_{1z_1N}(d)-F'_{2z_1N}(d)| > c/2\} \nonumber \\
&&  \geq  {\rm pr}\{|F'_{1z_1N}(d_0)-F'_{2z_1N}(d_0)| > c/2\} \nonumber \\
&& \geq 
 {\rm pr}\{|F'_{1z_1N}(d_0)-F'_{1z_1}(d_0)|<c/4\} {\rm pr}\{|F'_{2z_1N}(d_0)-F'_{2z_1}(d_0)|<c/4\}  \geq (1-2e^{-Nc^2/8})^2 \nonumber 
\end{eqnarray}
where in the last row, the first inequality follows since if $|F'_{1z_1N}(d_0)-F'_{1z_1}(d_0)|<c/4$ and 
$|F'_{2z_1N}(d_0)-F'_{2z_1}(d_0)|<c/4$, given that $|F'_{1z_1}(d_0)-F'_{2z_1}(d_0)|=c$, it implies that 
$|F'_{1z_1N}(d_0)-F'_{2z_1N}(d_0)| > c/2$, and the last inequality is the Dvoretzky–-Kiefer–-Wolfowitz inequality \citep{DKW56}. Therefore, when $H_0$ is false, the probability that the statistic is greater than $c/2$ goes to 1 as $N\rightarrow \infty$.

When $H_0$ is true, let $F_{z}'$ denote the common distribution function of $\|Y-z \|$.  For each $z \in \{ z_1,\ldots,z_M\}$, 
\begin{eqnarray}
&&  {\rm pr}\{KS_N(z)>c/2\} =  {\rm pr}\{\sup_d|F'_{1zN}(d) - F'_{z}(d) + F'_{z}(d) - F'_{2zN}(d)| > c/2\} \nonumber \\
&&\leq  {\rm pr}\{\sup_d|F'_{1zN}(d) - F'_{z}(d)| + \sup_d|F'_{2zN}(d) - F'_{z}(d)| > c/2\} \nonumber \\ && \leq  {\rm pr}\{\sup_d|F'_{1zN}(d) - F'_{z}(d)| >c/4\} +  {\rm pr}\{\sup_d|F'_{2zN}(d) - F'_{z}(d)| >c/4\} \leq 4e^{-Nc^2/8}, \label{eq-dkw-forpv}
\end{eqnarray}
where the last inequality follows from the Dvoretzky–-Kiefer–-Wolfowitz inequality.
It follows from (\ref{eq-dkw-forpv}) that the Bonferonni adjusted p-value is  bounded above  by $4Me^{-Nc^2/8}$, and therefore goes to zero  as $N\rightarrow \infty$ for $M=o(e^N)$, proving consistency.

For item 3,  the proof is very similar. Hommel's global null $p$-value is at most $M (\sum_{l=1}^M1/l) p_{(1)}$,  and as in the proof for item 2 it is bounded above by  $4M(\sum_{l=1}^M1/l)e^{-Nc^2/8}$, which goes to zero as $N\rightarrow \infty$ for $M\log M = o(e^N)$.

For item 4, let $z_0 \in \Re^q$ be a center point sampled from $\nu$. 
When $H_0$ is false, $\nu$-almost surely $KS(z_0)=c>0$. By Lebesgue's density theorem $\nu$-almost surely there exists an $\epsilon$ such that if $r<\epsilon$ then at least half of the ball $B_q(z_0,r)$ is within the support $S$. 
Since $F_1$ and $F_2$ are continuous, $KS(z)$ is a continuous function of $z$. Therefore,  there exists an $\epsilon'<\epsilon$ such that  $KS(z)>c/2$  for all  $z\in B_q(z_0,\epsilon')\cap S$. 
Similar arguments to those for item 2 show that $\nu$-almost surely for any $z_i\in S\cap B_q(z_0,\epsilon')$,
\begin{equation} \label{eq-thm3item4}   
Pr(KS_N(z_i)<c/4)<4e^{-Nc^2/32}. 
\end{equation} 
Therefore, $Pr(\cup_{z_i\in S\cap B_q(z_0,\epsilon')}  KS_N(z_i)<c/4)<4Me^{-Nc^2/32}.$
Since $\nu$-almost surely ${\rm pr}\{Z \in   S\cap B_q(z_0,\epsilon')\}>0$, then $\nu$-almost surely with probability going to one $T1$ is $O(M)$, as long as $M=o(e^N)$.  On the other hand when $H_0$ is true,  $E(KS_N(z))=O(1/\sqrt{N})$, see for example  \citet{Marsaglia2003}.  Therefore, $E(T1)=O(M/\sqrt{N})$, and by Markov's inequality the permutation test based on $T1$ will have  $\nu$-almost surely  power increasing to one as the sample size increases.
For the test based on $T_2$, from  equations (\ref{eq-thm3item4}) and (\ref{eq-dkw-forpv}) it follows that for $N$ large enough   $p_i<4e^{-Nc^2/32}$ for $z_i \in S\cap B_q(z_0,\epsilon')$, $i=1,\ldots,M$. Therefore, for $N$ large enough $-2\sum_{i=1}^M \log p_i$ is  greater than $O(NM){\rm pr}\{Z \in   S\cap B_q(z_0,\epsilon')\}$. On the other hand, when $H_0$ is true $P_i$ is uniformly distributed, so  $E(-2\sum_{i=1}^M \log P_i)$ is $O(M)$. By  Markov's inequality the permutation test based on $-2\sum_{i=1}^M \log p_i$ will have $\nu$-almost surely power increasing to one as the sample size increases.
\end{proof}

The test statistics $S1$ and $T1/M$ converge to  meaningful population quantities, 
\begin{eqnarray}\label{eq-KSmax}
\lim_{N, M \rightarrow \infty} S_1 = \lim_{M\rightarrow \infty} \max_{z_1,\ldots,z_M} KS(z) = \sup_{z\in S}KS(z), \nonumber \\
\lim_{N, M \rightarrow \infty} T_1/M = \lim_{M\rightarrow \infty} \sum_{i=1}^M KS(z_i)/M = E\{KS(Z) \}, 
\end{eqnarray}
where the expectation is over the distribution of the center point $Z$.

Arguably, the most natural choice of center points is the sample points themselves. 
Interestingly,  if the univariate test is a $U$-statistic \citep{Hoeffding48b} of order $m$, then the resulting multivariate test is a $U$-statistic of order $m+1$, if each sample point acts as a center point, and the univariate test statistics are averaged. The proof is as follows. Denote $B(a,b)$ for the binomial coefficient $a$ choose $b$. If the univariate test statistic $T_{N-1}$ is a $U$-statistic, then it can be written as 
$
T_{N-1}= \sum_{C_{N-1,m}}h\{ (u_{j_1},v_{j_1}),\ldots,(u_{j_m},v_{j_m}) \}/B(N-1, m),$ where $h$ is a symmetric function, $(u_{j_1},v_{j_1}),\ldots,(u_{j_m},v_{j_m})$ is a subset of size $m$  from  a sample of size $N-1$,   and $C_{N-1,m}$ is the set of all such subsets of size $m$.   
The multivariate test statistic is therefore 
$$
\sum_{C_{N,m+1}} f\{(x_{j_1},y_{j_1}), \ldots, (x_{j_{m+1}}, y_{j_{m+1}}) \}/B(N, m+1), 
$$
where $f\{(x_1,y_1), \ldots, (x_{m+1}, y_{m+1}) \}$ is the symmetric function 
$$
\frac 1{m+1}[ h\{(\|x_{k}-x_{1} \| ,\| y_{k}-y_1 \|), k=2,\ldots, m+1 \} + \ldots +  h\{(\|x_{k}-x_{m+1} \| ,\| y_{k}-y_{m+1} \|), k=1,\ldots, m\}]. 
$$


\section{Connection to existing methods} \label{subsec-existing}
 We are aware of two multivariate test statistics of the above-mentioned form: aggregation of the univariate test statistics on the distances from center points. The tests are the two sample test of \cite{Hall02} and the independence test of \cite{Heller12}. Both these tests use the second pooling method mentioned above by summing up the univariate test statistics. Furthermore, both these tests use the $N$ sample points as the center points (or $z$'s) and perform a univariate test on the remaining $N-1$ points. Indeed, \cite{Hall02} recognized that their test can be viewed as summing up univariate Cramer von-Mises tests on the distances from each sample point. We shall show that the test statistic of \cite{Heller12} can  be viewed as aggregation by summation of the univariate weighted Hoeffding independence test suggested in \cite{Thas04}. 
 
 \cite{Heller12} presented a permutation test based on the test statistic $\sum_{i=1}^N \sum_{j=1, j\neq i}^N S(i,j)$, where $S(i,j)$ is the Pearson test score for the $2\times 2$ contingency table for the random variables $I(\|X-x_i\|\leq \|x_j-x_i\|)$ and $I(\|Y-y_i\|\leq \|y_j-y_i\|)$, where $I(\cdot)$ is the indicator function. Since $\|X-x_i\|$ and $\|Y-y_i\|$ are univariate random variables, $S(i,j)$ can also be viewed as the test statistic for the test of independence between $\|X-x_i\|$ and $\|Y-y_i\|$, based on the $2\times 2$ contingency table induced by the $2\times 2$ partition of $\Re^2$ about the point $(\|x_j-x_i\|,\|y_j-y_i\| )$ using the $N-2$ sample points $(\|x_k-x_i\|,\|y_k-y_i\| ), k=1,\ldots, N, k\neq i, k\neq j$. \cite{Thas04} showed that the  statistic that   sums the Pearson test statistics over all  $2\times 2$ partitions of $\Re^2$ based on the observations, results in a consistent test of independence for univariate random variables.  The test statistic of \cite{Thas04} on the sample points $(\|x_k-x_i\|,\|y_k-y_i\| ), k=1,\ldots, N, k\neq i$, is therefore $\sum_{j=1, j\neq i} S(i,j)$. The multivariate test statistic of \cite{Heller12} aggregates by summation the univariate test statistics of  \cite{Thas04}, where the $i$th univariate test statistic is based on the $N-1$ distances of $x_k$ from $x_i$, and the $N-1$ distances of $y_k$  from $y_i$, for $k=1,\ldots, N, k\neq i$.

Of course, not all known consistent multivariate tests belong to the framework defined above. As an interesting example we discuss the energy test of  \cite{Szekely04} and \cite{Baringhaus04} for the two-sample problem. Without loss of  generality, let $y_1,\ldots, y_{N_1}$ be the observations from $F_1$, and $y_{N1+1},\ldots, y_{N}$ be the observations from $F_2$.
The test statistic is  
\begin{eqnarray}
  \mathcal{E} =\frac{N_1N_2}{N_1+N_2}\left(\frac{2}{N_1N_2}\sum_{l=1}^{N_1}\sum_{m=N_1+1}^{N}
\| y_{l}-y_{m} \| -\frac{1}{N_1^2}\sum_{l=1}^{N_1}\sum_{m=1}^{N_1} \| y_{l}-y_{m} \|-\frac{1}{N_2^2}\sum_{l=N_1+1}^{N}\sum_{m=N_1+1}^{N}\| y_{l}-y_{m} \| \right), \nonumber
\end{eqnarray}
where $\| \cdot \|$ is the Euclidean norm. It is easy to see that $ \mathcal{E} =\sum_{i=1}^N S_i$, where the univariate score is $S_i = \left\{\frac{1}{N_1}\sum_{m=1}^{N_1}\| y_{i}-y_{m} \|-\frac{1}{N_2}\sum_{m=N_1+1}^{N_2}\| y_{i}-y_{m}  \|\right\}w(i)$, $w(i) = -\frac {N_2}{N}$ if $i\leq N_1$ and $w(i) = \frac {N_1}{N}$ if $i> N_1$,  for  $i\in \{ 1,\ldots, N\}$.
The statistic $S_i$ is not an omnibus consistent test statistic, since a test based on $S_i$ will have no power to detect difference in distributions with the same expected  distance from $y_i$ across groups. However,  the energy test is omnibus consistent. 





\section*{Acknowledgement}
We thank Elchanan Mossel for useful discussions of the main results.


\end{document}